\newlength{\hsbw}
\newcommand\HOLSpacing{13pt}
\def\R{{\ooalign{\hfil\raise.2ex\hbox{\tiny R}\hfil\small\crcr\mathhexbox20D}}}
\DeclareSymbolFont{AMSb}{U}{msb}{m}{n}
\DeclareMathSymbol{\N}{\mathbin}{AMSb}{"4E}
\DeclareMathSymbol{\B}{\mathbin}{AMSb}{"4E}
\renewcommand{\implies}{\supset}
\def\b2STE{\mathit{bool2ste}}
\renewcommand{\Cup}{\bowtie}
\def\imp{\supset}
\def\Val{\mathit{Eval}}
\def\AND{And}
\newcommand{\cunsat}{\textrm{SHRUTI}\xspace}
\newenvironment{itemizeC}{\begin{compactenum}}{\end{compactenum}}
\begin{document}

\title{Industrial-Strength Formally Certified\\ SAT Solving}

\author{Ashish Darbari\inst{1} and Bernd Fischer\inst{1} and Joao Marques-Silva\inst{2}}

\institute{{School of Electronics and Computer Science\\
    University of Southampton, Southampton, SO17 1BJ, England}
  \and
  {School of Computer Science and Informatics\\
    University College Dublin, Belfield, Dublin 4, Ireland}}


\maketitle

\begin{abstract}
Boolean Satisfiability (SAT) solvers are now routinely used in the
verification of large industrial problems. However, their application
in safety-critical domains such as the railways, avionics, and
automotive industries requires some form of assurance for the results,
as the solvers can (and sometimes do) have bugs. Unfortunately, the
complexity of modern, highly optimized SAT solvers renders impractical
the development of direct formal proofs of their correctness. This
paper presents an alternative approach where an untrusted,
industrial-strength, SAT solver is plugged into a trusted, formally
certified, SAT proof checker to provide industrial-strength certified
SAT solving. The key novelties and characteristics of our approach are
(i) that the checker is automatically extracted from the formal
development, (ii), that the combined system can be used as a
standalone executable program independent of any supporting theorem
prover, and (iii) that the checker certifies any SAT solver respecting
the agreed format for satisfiability and unsatisfiability claims.  The
core of the system is a certified checker for unsatisfiability claims
that is formally designed and verified in Coq.  We present its formal
design and outline the correctness proofs. The actual standalone
checker is automatically extracted from the the Coq development.
An evaluation of the certified checker on a representative set of
industrial benchmarks from the SAT Race Competition shows that, albeit
it is slower than uncertified SAT checkers, it is significantly faster
than certified checkers implemented on top of an interactive theorem
prover.
\end{abstract}

\section{Introduction}
Advances in Boolean satisfiability SAT technology have made it possible
for SAT solvers to be routinely used in the verification of large
industrial problems, including safety-critical domains that require a
high degree of assurance such as the railways, avionics, and
automotive industries~\cite{Hammarberg:2005p212,Penicka:2007p6353}.
However, the use of SAT solvers in such domains requires some form of
assurance for the results. This assurance can be provided in two
different ways.



First, the solver can be proven correct once and for all. However,
this approach had limited success.
For example,
Lescuyer et al.~\cite{Lescuyer:2009p1671} 
formally designed and verified
a SAT solver using the Coq proof-assistant \cite{CoqBook}, but without
any of the techniques and optimizations used in modern solvers. Reasoning about these
optimizations makes the formal correctness proofs exceedingly hard. This was
shown by the work of Mari{\'c}~\cite{Maric:2009p6673}, who verified the
algorithm 
used in the ARGO-SAT solver but restricted the verification to the pseudo-code level, and in particular, 
did not verify the actual solver itself. 
In addition, the formal verification has to be repeated for every new SAT
solver (or even a new version of a solver), or else the user is locked into
using the specific verified solver.


Alternatively, a \emph{proof checker} can be used to validate each
individual outcome of the solver independently; this requires
the solver to produce a \emph{proof trace} that is viewed as a
certificate justifying  the outcome of the solver.
This approach 
was used to design several checkers such as tts, 
Booleforce, PicoSAT and zChaff~\cite{satcomp}.
However, these checkers are typically implemented by the developers
of the SAT solvers whose output they are meant to check, which can lead
to bugs being masked, and none of them was formally designed or
verified, which means that they provide only limited assurance.




The problems of both approaches can be circumvented if the \emph{checker} rather than
the solver is proven correct, once and for all. This is
substantially simpler than proving the solver correct, because the
checker is comparatively small and straightforward. It does not lead to a
system lock-in, because the checker can work for all solvers that can
produce proof traces (certificates) in the agreed format.  
This approach was followed by Weber and Amjad \cite{Weber:2009p867} 
in their formal development
of a proof checker for zChaff and Minisat proof traces. Their core idea 
is to replay the derivation encoded in the proof trace
\emph{inside} LCF-style interactive theorem provers such as HOL~4, Isabelle, and HOL Light. Since the
design and implementation of these provers is based on a small trusted
kernel of inference rules, assurance is very high. However this comes
at the cost of usability: their checker can run \emph{only inside} the
supporting prover, and not as a standalone tool. Moreover, performance
bottlenecks become prominent when the size of the problems increases.
\begin{figure}[t]
\begin{center}
  \includegraphics[scale=1.2]{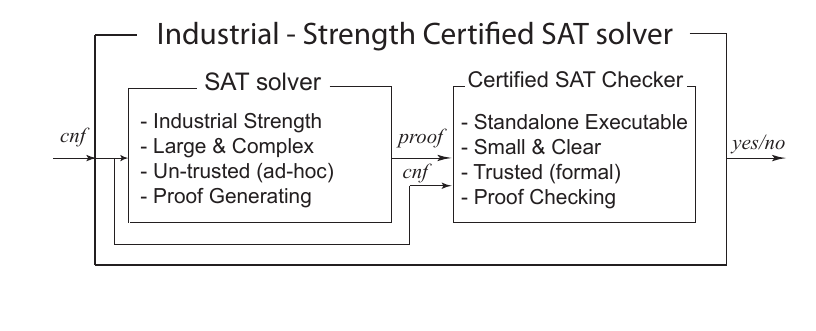}
\end{center}\vspace*{-10mm}
\caption{High-Level View of an Industrial-Strength Formally Certified SAT Solver.}\label{cunsat}
\end{figure}

Here, we follow the same general idea of a formally certified proof
checker, but depart considerably from Weber and Amjad
in how we design and implement our solution.  
We describe an approach where one can plug an untrusted,
industrial-strength SAT solver into a formally certified SAT proof
checker to provide an industrial-strength certified SAT solver.
We designed, formalized and verified the SAT proof checker for both
satisfiable and unsatisfiable problems. In this paper, 
we focus on the more interesting aspect of checking
unsatisfiable claims; satisfiable certificates are significantly
easier to formally certify.
Our certified checker \cunsat  is formally designed and verified using
the higher-order logic  based  proof assistant Coq~\cite{CoqBook}, but
we never use Coq as a checker; instead we {\it automatically} extract an OCaml program
from the formal development that is compiled to a standalone
executable that is used independently of Coq. In this regard, our
approach prevents the user to be locked-in to a specific proof
assistant, something that was not possible with Weber and Amjad's
approach.
 A high-level
architectural view of our approach is shown in Figure~\ref{cunsat}.
Since it combines certification and ease-of-use, it enables the use of
certified checkers as regular components in a SAT-based
verification work flow.





\section{Propositional Satisfiability}

\subsection{Satisfiability Solving}
Given a propositional formula, the
goal of satisfiability solving is to determine whether there is 
an assignment of the Boolean truth values (i.e., True, False) to the
variables in the formula such that the formula evaluates to true. If such an
assignment exists, the given formula is said to be \emph{satisfiable} or SAT,
otherwise the formula is said to be \emph{unsatisfiable} or
UNSAT. Many problems of practical interest in system verification
involved proving unsatisfiability, one concrete example being bounded
model checking~\cite{biere-bmc03}.


For efficiency purposes, SAT solvers represent the propositional formulas in
conjunctive normal form (CNF), where the entire formula is a conjunction of
\emph{clauses}. Each clause itself denotes a disjunction of \emph{literals},
which are simply (Boolean) variables or negated variables.
An efficient CNF representation uses non-zero integers to represent literals.
A positive literal is represented by a
positive integer, whilst a negated one is denoted by a negative integer. 
As an example, the (unsatisfiable) formula 
\[(a \wedge b) \vee (\neg a \wedge b) \vee (a \wedge \neg b) \vee (\neg a \wedge \neg b)\]
over two propositional variables $a$ and $b$
can thus be represented as

\smallskip
\begin{tabular}{rrr}  
  1\; & 2\; & 0\\
  -1\; & 2\; & 0\\
  1\; & -2\; & 0\\
  -1\; & -2\; & 0
\end{tabular}
\smallskip
\coqdoceol

The zeroes are delimiters that separate the clauses from each other.



SAT solvers take a Boolean formula, for example represented in the 
DIMACS notation used here,
and produce a SAT/UNSAT claim. 
A \emph{proof-generating} SAT solver produces additional evidence (or
\emph{certificates}) to support its claims. For a SAT claim, the certificate
simply consists of an assignment.
It is
usually trivial to check whether that assignment---and thus
the original SAT claim---is correct. One simply
substitutes the Boolean values given by the assignment in the formula
and then evaluates the overall formula, checking that it indeed is true.
For UNSAT claims, the evidence is more complicated, and the solvers
need to return a resolution \emph{proof trace} as certificate.
Unsurprisingly, checking these UNSAT certificates is more complicated as well.

\subsection{Proof Checking}
When the solver claims a given problem is UNSAT, we can independently
re-play the proofs produced by the solver, to check that the solver's
output is correct. For a given problem, if we can follow the
resolution inferences given in the proof trace to derive an empty clause, then
we know that the proof trace correctly denotes an UNSAT instance for
the problem, and we can conclude that the given problem is indeed
UNSAT.

A proof trace consists of the original clauses used during resolution and
the intermediate resolvents obtained by resolving the original input
clauses. 
The part of the proof trace that specifies how the input clauses have been resolved in sequence to derive a conflict is organized as {\it chains}. These
chains are often referred to as the regular input resolution proofs,
 or the trivial proofs~\cite{Beame:2004p1080,Biere:2008p301}. We call
 the input clauses in a chain its \emph{antecedents} and its final
 resolvent simply its \emph{resolvent}. Designing an efficient checking
methodology relies to some extent on the representation of the proof
trace produced by the solvers. Representing proof chains as a trivial
resolution proof is a key constraint~\cite{Biere:2008p301}.  The trivial resolution
proof requires that the clauses generated to form the proof should be
aligned in such a manner in the trace so that whenever a pair of
clauses is used for resolution, {\it at most} one complementary pair of
literals is deleted. Another important constraint that is needed for efficiency
reasons is that {\it at least} one pair of complementary literals gets
deleted whenever any two clauses are used in the trace for
resolution. This is needed because we want to avoid search during
checking, and if the proof trace respects these two criteria we can
design a checking algorithm that can validate the proofs in a single
pass using an algorithm which is linear in the size of the input
clauses. 

\subsection{PicoSAT Proof Representation}


Most proof-generating SAT
solvers~\cite{Biere:2008p301,minisat,zhang-date03} preserve these two
criterions. We carried out our first set of experiments with
PicoSAT~\cite{Biere:2008p301}. PicoSAT ranked as one of the best
solvers in the industrial category of the SAT Competitions 2007 and
2009, and in the SAT Race 2008. Moreover, PicoSAT's proof representation
is in {\it ASCII} format. It was reported by Weber and
Amjad~\cite{Weber:2009p1923} that some other solvers such as Minisat
generate proofs in a more compact, but binary format.

Another advantage of PicoSAT's proof trace format besides using an
ASCII format is its simplicity. It does not record the information
about pivot literals as some of the other solvers such as
zChaff. It is however straightforward to develop translators
for formats used by other SAT solvers~\cite{vangelder-sat07}.
As a proof of concept, we developed a translator from
zChaff's proof format to PicoSAT's proof format, which can be used for
validating unsatisfiability proofs obtained with zChaff.
%

A PicoSAT proof trace consists of
 rows representing the input clauses, followed by rows encoding the proof
chains. Each row representing a chain consists of an asterisk (*) as place-holder 
for the chain's resolvent,%
\footnote{This is generated by PicoSAT; there is another option of
  generating proof traces from PicoSAT where instead of the asterisk
  the actual resolvents are generated delimited by a single zero from the
  rest of the chain.
} 
followed by the identifiers of the clauses involved in the chain. Each chain
row thus contains at least two clause identifiers, and denotes an
application of one or more of the resolution inference rule, describing a
trivial resolution derivation. Each row also starts with a non-zero positive 
integer denoting the identifier for that row's (input or resolvent) clause. 
In an actual trace there are additional zeroes as delimiters at the
end of each row, but we remove these before we start proof
checking. For the UNSAT formula shown in the previous section, the
corresponding proof trace generated from PicoSAT looks as follows:

\smallskip
\begin{tabular}{rrrrr}
  1\; & 1\; & 2\; &  \\
  2\; & -1\; & 2\; &  \\
  3\; & 1\; & -2\; &  \\
  4\; & -1\; & -2\; &  \\
  5\; & *\; & 3\; & 1\; \\
  6\; & *\; & 4\; & 2\; & 5 \\
\end{tabular}
\smallskip

The first four rows denote the input clauses from the original problem (see
above) that are used in the resolution, with their identifiers referring to the
original clause numbering, whereas rows 5 and 6 represent the proof chains.  In
row 5, the clauses with identifiers 3 and 1 are resolved using a single
resolution rule, whilst in row 6 first the original clauses with identifier 4
and 2 are resolved and then the resulting clause is resolved against the clause
denoted by identifier 5 (i.e., the resolvent from the previous chain), in total
using two resolution steps. 

PicoSAT by default creates a compacted form of proof traces, where 
the antecedents for the derived clauses are
not ordered properly within the chain. This means that there are
instances in the chain where we resolve a pair of adjacent clauses and
no literal is deleted. This violates one of the constraints we
explained above, thus we cannot deduce an existence of an empty clause
for this trace unless we re-order the antecedents in the chain.

PicoSAT comes with an uncertified proof checker called Tracecheck that
can not only check the outcome of PicoSAT but also corrects the
mis-alignment of traces. The outcome of the alignment process is an
extended proof trace and these then become the input to the certified
checker that we design.

\section{The SHRUTI Certified Proof Checker}
Our approach to efficient formally certified SAT solving relies on the
use of a \emph{certified} checker program that has been designed and
implemented formally, but can be used in practice, independently of any formal
development environment. Rather than verifying a separately developed
checker we follow a correct-by-construction approach in which we
formally design and mechanically verify a checker using a 
proof assistant to achieve the highest level of confidence,
and then use program extraction to obtain a standalone executable.

Our checker \cunsat 
takes an input CNF file
which contains the original problem description and a proof
trace file and checks whether the two together denote an (i) UNSAT
instance, and (ii) that they are ``legitimate''. Our focus here is on
(i) where we check that each step of the trace is correctly applying the resolution
inference rule. However, in order to gain full assurance about the UNSAT
claim, we also need to check that all input clauses used in the resolution
in the trace are taken from the original problem, i.e., that the proof 
trace is legitimate. \cunsat provides this as an option, but
as far as we are aware most checkers do not do this check.


Our formal development follows the LCF style~\cite{lcf},
and in particular, only uses definitional extensions, i.e., new
theorems can only be derived by applying previously derived inference rules. 
Axiomatic extensions though possible are prohibited, since one can
assume the existence of a theorem without a proof. Thus, we never use
it in our own work. 
We use the the Coq proof assistant~\cite{CoqBook} as a development tool.
Coq is based on the Calculus of Inductive Constructions
and encapsulates the concepts of typed higher-order logic. 
It uses the notion of proofs as types, and 
allows constructive proofs and use of dependent types. 
It has been successfully used in the design and
implementation of large scale certification of software such as in the
CompCert~\cite{Leroy:2008p6143} project.

For our development of \cunsat, we first formalize in Coq the definitions of
resolution and its auxiliary functions and then prove inside Coq that these 
definitions are correct, i.e., satisfy the properties expected of resolution. 
Once the Coq formalization is complete, OCaml code is extracted from it 
through the extraction API included in the Coq proof assistant. 
The extracted OCaml code expects
its input in data structures such as tables and lists. These data structures
are built by some glue code that also handles file I/O and pre-processes the
read proof traces (e.g., removes the zeroes used as separators for the 
clauses). The glue code wraps around the extracted checker and
the result is then compiled to a native machine code executable that
can be run independently of the proof-assistant Coq.

\subsection{Formalization in Coq}
In this section we present the formalization of \cunsat. Its core logic
is formalized as a shallow embedding in Coq. In a shallow embedding we
identify the object data types (types used for \cunsat)
 with the types of the meta-language, which in our case happens to be the
 Coq datatypes. 


Since most checkers read the clause
representation on integers (e.g., using the DIMACS notation) we incorporate
integers as first class elements in our formalization, so we do not
have to map literals to Booleans. Thus, inside Coq, we
denote {\it literals} by integers, and {\it clauses} by lists of
integers. Antecedents (denoting the input clauses) in a proof chain are
represented by integers and a proof chain itself by a list of integers. 
A resolution proof is represented internally in our
implementation by a table consisting of (\emph{key},\emph{binding})
pairs.  The key for this table is the identifier obtained from the
proof chains read from the input proof trace file. The binding of this
table is the actual resolvent obtained by resolving the clauses
specified (in the proof trace). When the input proof trace is read,
the identifier corresponding to the first row of the proof chain
becomes the starting point for resolution checking. Once the resolvent
is calculated for this the process is repeated for all the remaining
rows of the proof chain, until we reach the end of the trace
input. If the identifier for the last row of the proof chain
denotes an empty resolvent, we conclude that the given problem and its
trace represents an UNSAT instance.


We use the usual notation for quantifiers ($\forall$, $\exists$) and
logical connectives ($\wedge, \vee, \neg$) but
distinguish implication over propositions ($\imp$) and
over types ($\rightarrow$) for presentation clarity, though inside
Coq they are exactly the same. The notation $\Rightarrow$ is used
during pattern matching (using \coqdockw{match-with}) as in other functional languages. For type
annotation we use $\mathrm{:}$, and for the cons operation on lists 
we use \coqdocvar{::}. The empty list is denoted by \coqdocvar{nil}. The
set of integers is denoted by \coqdocvar{Z}, the type of polymorphic
list by  \coqdocvar{list} and the list of integers by
\coqdocvar{list\, Z}. List containment is represented by $\in$ and its
negation by $\notin$. The function $\mathit{abs}$ computes the
absolute value of an integer.  We use the keyword
\coqdockw{Definition} to present our function
definitions. It is akin to defining functions in Coq. Main data structures that we used in the Coq formalization are lists, 
and finite maps (hashtables with integer keys and polymorphic bindings).

We define our resolution function ($\bowtie$) with the help of two auxiliary
functions $\mathit{union}$ and $\mathit{auxunion}$. 
Both functions compute the union of two clauses represented as integer lists,
but differ in their behavior when they encounter complementary literals:
whereas $\mathit{union}$ \emph{removes} both literals and then calls 
$\mathit{auxunion}$ to process the remainder of the lists;
$\mathit{auxunion}$ \emph{copies} both
the literals into the output and thus produces a tautological clause.
Ideally, if the SAT solver is sound and the proof trace reflects the sound
outcome, then for any pair of clauses that are resolved, there will be
only one pair of complementary literals and we do not need two
functions. However in reality, a solver or its proof trace can have
bugs and it can create instances of clauses in the trace with multiple
complementary pair of literals.  Hence, we employ the two auxiliary functions
to ensure that the resolution function deals with this in a sound way.

We will later explain in more detail the functionality of the auxiliary
functions but 
both functions 
expect the input clauses to respect three well-formedness
criteria:  there
should be no duplicates in the clauses ({$\mathit{NoDup}$}\normalsize); there
should be no complementary pair of literals {\it within} any clause
({$\mathit{NoCompPair}$), and the clauses should be sorted by absolute value
($\mathit{sorted}$).
The predicate {$\mathit{Wf}$} \normalsize encapsulates these properties.
\coqdocemptyline
\hrule
\coqdocemptyline
 {$\mathsf{Definition}\;\; \mathit{Wf}\; c=\; 
  \mathit{NoCompPair}\, c\; \wedge\; \mathit{NoDup}\,c\; \wedge\;
  \mathit{sorted}\, c$}
\coqdocemptyline
\hrule
\coqdocemptyline
The assumptions that there are no duplicates and no complementary pair
of literal within a clause are essentially the constraints imposed on
input clauses when the resolution function is applied in
practice. Sorting is enforced by us to keep the complexity of our
algorithm linear. 


The union function takes a pair of sorted (by absolute value) lists of
integers, and an accumulator list, and computes the
resolvent by doing a pointwise comparison on input
literals. 
\coqdocemptyline
\hrule
\coqdocemptyline
\coqdocnoindent
\coqdockw{Definition} \coqdocvar{union} (\coqdocvar{c_1}
\coqdocvar{c_2} : \coqdocvar{list\, Z})(\coqdocvar{acc} :
\coqdocvar{list\, Z}) = \coqdoceol
\coqdocnoindent
\coqdockw{match} \coqdocvar{c_1,c_2} \coqdockw{with}\coqdoceol
\coqdocindent{0.25em}
\ensuremath{|} \coqdocvar{nil,c_2}   \ensuremath{\Rightarrow} \coqdocvar{app} (\coqdocvar{rev} \coqdocvar{acc})  \coqdocvar{c_2}\coqdoceol
\coqdocindent{0.25em}
\ensuremath{|} \coqdocvar{c_1,nil}   \ensuremath{\Rightarrow} \coqdocvar{app} (\coqdocvar{rev} \coqdocvar{acc})  \coqdocvar{c_1}\coqdoceol
\coqdocindent{0.25em}
\ensuremath{|} \coqdocvar{x::xs}, \coqdocvar{y::ys}
\ensuremath{\Rightarrow}
\coqdockw{if} (\coqdocvar{x} + \coqdocvar{y} = {0}) \coqdockw{then} 
\coqdocvar{auxunion} \coqdocvar{xs} \coqdocvar{ys} \coqdocvar{acc}
\coqdoceol
\coqdocindent{8.3em}
\coqdockw{else} \coqdockw{if}  (\coqdocvar{abs} \coqdocvar{x} $<$
\coqdocvar{abs} \coqdocvar{y}) \coqdockw{then}
\coqdocvar{union} \coqdocvar{xs} (\coqdocvar{y::ys})(\coqdocvar{x::acc}) 
\coqdoceol
\coqdocindent{8.3em}
\coqdockw{else} 
\coqdockw{if}  (\coqdocvar{abs} \coqdocvar{y} $<$ \coqdocvar{abs}
\coqdocvar{x}) \coqdockw{then} 
\coqdocvar{union} (\coqdocvar{x::xs}) \coqdocvar{ys} (\coqdocvar{y::acc}) 
\coqdoceol
\coqdocindent{8.3em}
\coqdockw{else}
\coqdocvar{union} \coqdocvar{xs} \coqdocvar{ys} (\coqdocvar{x::acc})\coqdoceol
\coqdocemptyline
\hrule
\coqdoceol


We already pointed out above that this function and the auxiliary
union function - $\mathit{auxunion}$ (shown below) that it employs are
different in behaviour if the literals being compared are
complementary. 
However, when the literals are non-complementary, if they are equal, only one
copy is put in the resolvent whilst when they are unequal both are
kept in the resolvent.  Once a single run of any
of the clauses is finished, the other clause is merged with 
the accumulator. Actual sorting  in our case is done by simply reversing the
 accumulator (since all elements are in descending order).

\coqdocemptyline
\hrule
\coqdocemptyline
\coqdocnoindent
\coqdockw{Definition} \coqdocvar{auxunion} (\coqdocvar{c_1}
\coqdocvar{c_2} : \coqdocvar{list\, Z})(\coqdocvar{acc} :
\coqdocvar{list\, Z}) = \coqdoceol
\coqdocnoindent
\coqdockw{match} \coqdocvar{c_1,c_2} \coqdockw{with}\coqdoceol
\coqdocindent{0.25em}
\ensuremath{|} \coqdocvar{nil,c_2}   \ensuremath{\Rightarrow} \coqdocvar{app} (\coqdocvar{rev} \coqdocvar{acc})  \coqdocvar{c_2}\coqdoceol
\coqdocindent{0.25em}
\ensuremath{|} \coqdocvar{c_1,nil}   \ensuremath{\Rightarrow} \coqdocvar{app} (\coqdocvar{rev} \coqdocvar{acc})  \coqdocvar{c_1}\coqdoceol
\coqdocindent{0.25em}
\ensuremath{|} \coqdocvar{x::xs}, \coqdocvar{y::ys}
\ensuremath{\Rightarrow}
\coqdockw{if}  (\coqdocvar{abs} \coqdocvar{x} $<$ \coqdocvar{abs} \coqdocvar{y}) \coqdockw{then}
\coqdocvar{auxunion} \coqdocvar{xs} (\coqdocvar{y::ys})
(\coqdocvar{x::acc}) 
\coqdoceol
\coqdocindent{8.3em}
\coqdockw{else} \coqdockw{if}  (\coqdocvar{abs} \coqdocvar{y} $<$
\coqdocvar{abs} \coqdocvar{x}) \coqdockw{then} 
\coqdocvar{auxunion} (\coqdocvar{x::xs}) \coqdocvar{ys} (\coqdocvar{y::acc})\coqdoceol\coqdocindent{8.3em}
\coqdockw{else} \coqdockw{if} \coqdocvar{x}=\coqdocvar{y} \coqdockw{then}
\coqdocvar{auxunion} \coqdocvar{xs} \coqdocvar{ys} (\coqdocvar{x::acc})
\coqdoceol
\coqdocindent{8.3em}
\coqdockw{else}
\coqdocvar{auxunion} \coqdocvar{xs} \coqdocvar{ys} (\coqdocvar{x::y::acc})
\coqdocemptyline
\hrule
\coqdoceol

We can now show the actual resolution function denoted by
$\Cup$ below. It makes use of the $\mathit{union}$ function.
\coqdocemptyline
\hrule
\coqdocemptyline
\coqdocnoindent
\coqdockw{Definition} \coqdocvar{c_1} $\Cup$ \coqdocvar{c_2} = (\coqdocvar{union} \coqdocvar{c_1} \coqdocvar{c_2} \coqdocvar{nil})\coqdoceol
\coqdocemptyline
\hrule
\coqdocemptyline

Given a problem representation in CNF form and a proof trace that respects the
well-formedness criterion and is a trivial resolution proof for the given
problem, \cunsat will deduce the empty clause and thus validate the solver's
UNSAT claim.  Conversely, whenever \cunsat validates a claim, the problem is
indeed UNSAT---\cunsat will never deduce an empty clause for a SAT instance and
will thus never give a false positive. 

If \cunsat cannot deduce the empty clause, it invalidates the claim.  This
situation can be caused by three different reasons. First, counter to the claim
the problem is SAT. Second, the problem may be UNSAT but the 
resolution proof may not represent this because it may have bugs. Third,
the traces either do not respect the well-formedness criteria, or do not
represent a trivial resolution proof.  In this case both the problem and the
proof may represent an UNSAT instance but our checker cannot validate it as
such.  In this respect our checker is incomplete.


\subsection{Soundness of the resolution function}
Here we formalize the soundness criteria for our
checker and present the soundness theorem stating that the definition
of our resolution function is sound.
We need to prove that the resolvent of a given pair of clauses is
logically entailed by the two clauses. Thus at a high-level we need to prove that:

\[\forall c_1\, c_2\, c_3\cdot\, (c_3 = c_1 \bowtie c_2) \implies\;
\{c_1,c_2\} \models\ c_3\]

\noindent
where $\models$ denotes the logical entailment. 

%
We can use the deduction theorem 
\[\forall a\, b\, c\cdot\, \{a,b\} \models c\; \equiv\; (a \wedge b \implies c) \]
to re-state what we
intuitively would like to prove:

\[\forall c_1\, c_2\cdot\,(c_1 \wedge c_2) \implies\ (c_1 \bowtie c_2)\]

\noindent
However instead of proving the above theorem directly we prove its contraposition:

\[\forall c_1 c_2\cdot\, \neg(c_1 \bowtie c_2)\; \implies\; \neg(c_1 \wedge c_2)\]

In our formalization a clause is denoted by a list of non-zero
integers, and a conjunction of clauses is denoted by a list of
clauses. We now present the definition of the logical
disjunction and conjunction functions that operate on the integer list
representation of clauses. We do this with the help of an
interpretation function that maps an integer to a Boolean value. 

The function $\Val$ shown below maps a list of
integers to a Boolean value by using the logical disjunction $\vee$
and an interpretation $I$ of the type $Z\rightarrow Bool$.
\coqdocemptyline
\hrule
\coqdocemptyline
\ensuremath{
\coqdockw{Definition}\; \Val\; nil\; I\; =\; \mathit{False} \\
\hspace*{20mm}\Val\;(x::xs)\; I\; =\; \Val\; x\; I\; \vee\; (\Val\; xs\; I)
}
\coqdocemptyline
\hrule
\coqdocemptyline
We now define what it means to perform a conjunction over a list of
 clauses. The function $\AND$ shown below takes a list of clauses and
 an interpretation $I$ (with type $Z \rightarrow Bool$) and returns a
Boolean which denotes the conjunction of all the clauses in the
list.
\coqdocemptyline
\hrule
\coqdocemptyline
\ensuremath{
\coqdockw{Definition}\; \AND\; nil\; I\; =\;\mathit{True}\\
\hspace*{20mm} \AND\; (x::xs)\; I\; =\; (\Val\, x\; I)\; \wedge\; (\AND\; xs\; I)
}
\coqdocemptyline
\hrule
\coqdocemptyline
The interpretations that we are interested in are the
logical interpretations which means that if we apply an interpretation I
on a negative integer the value returned is the logical negation of
the value returned when the same interpretation is applied on the
positive integer.
\coqdocemptyline
\hrule
\coqdocemptyline
$\coqdockw{Definition}\;\; Logical\; I = \forall
  (x:Z)\cdot\, I (-x) = \neg (I\; x)$
\coqdocemptyline
\hrule
\coqdocemptyline
Thus we can now state the precise statement of the soundness theorem
that we proved for our checker as:

\begin{theorem}{Soundness theorem}\\
\hspace*{5mm}$\forall c_1 c_2\cdot\, \forall\, I\cdot\,
Logical\; I\;\implies\;\neg(\Val\;(c_1 \bowtie c_2)\; I)\;
\implies\; \neg(\AND\; [c_1,c_2]\;I)$
\end{theorem}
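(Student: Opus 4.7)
The plan is to prove the contrapositive form $\Val(c_1)\;I \wedge \Val(c_2)\;I \implies \Val(c_1 \bowtie c_2)\;I$, which is logically equivalent to the stated theorem (since $\AND\;[c_1,c_2]\;I$ unfolds to $\Val(c_1)\;I \wedge \Val(c_2)\;I \wedge \mathit{True}$), and then to unfold $\bowtie$ to expose the call $\mathit{union}\;c_1\;c_2\;\mathit{nil}$. Because $\mathit{union}$ threads an accumulator that collects literals already seen, a direct induction on $c_1 \bowtie c_2$ is too weak; I would generalise over the accumulator and induct on the measure $|c_1|+|c_2|$, which strictly decreases in every recursive call of both $\mathit{union}$ and $\mathit{auxunion}$.

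I would first establish two supporting lemmas. (L1)~\emph{Accumulator absorption}: $\Val(\mathit{acc})\;I \implies \Val(\mathit{union}\;c_1\;c_2\;\mathit{acc})\;I$, and analogously for $\mathit{auxunion}$. This holds because both functions eventually copy $\mathit{rev}\;\mathit{acc}$ into the output in every terminating branch, and $\Val$ is invariant under reversal (it is just a disjunction over the list). (L2)~\emph{Soundness of $\mathit{auxunion}$}: $\Val(c_1)\;I \vee \Val(c_2)\;I \implies \Val(\mathit{auxunion}\;c_1\;c_2\;\mathit{acc})\;I \vee \Val(\mathit{acc})\;I$. Since $\mathit{auxunion}$ never drops literals, the proof is a routine case analysis on the pattern-match, appealing to (L1) whenever a literal is pushed onto the accumulator and to the inductive hypothesis on the smaller sub-problem.

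The main lemma (L3) is: assuming $\mathit{Logical}\;I$, $\Val(c_1)\;I \wedge \Val(c_2)\;I \implies \Val(\mathit{union}\;c_1\;c_2\;\mathit{acc})\;I \vee \Val(\mathit{acc})\;I$. The theorem follows by instantiating $\mathit{acc} = \mathit{nil}$, since $\Val(\mathit{nil})\;I = \mathit{False}$, and contrapositing. In the induction step I would case-split along the branches of $\mathit{union}$. The base cases, where one of $c_1, c_2$ is empty, reduce the output to $\mathit{app}\;(\mathit{rev}\;\mathit{acc})\;c_i$, and the goal becomes a trivial rearrangement of the disjunction. The branches $|x|<|y|$, $|y|<|x|$, or $x=y$ simply push a literal into the accumulator; a sub-case split on whether $I(x)$ (or $I(y)$) is true discharges the positive sub-case with (L1) and the negative sub-case with the inductive hypothesis, where the surviving disjunct $\Val(xs)\;I$ (or $\Val(ys)\;I$) is enough to feed it.

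The hard case, and the only one where $\mathit{Logical}\;I$ is genuinely used, is the complementary-pair branch $x+y=0$, which hands off to $\mathit{auxunion}\;xs\;ys\;\mathit{acc}$. Rewriting $I(-x)$ to $\neg I(x)$ via $\mathit{Logical}\;I$ and distributing the conjunction,
\[(I(x) \vee \Val(xs)\;I)\;\wedge\;(\neg I(x) \vee \Val(ys)\;I)\]
collapses, after eliminating the contradictory conjunct $I(x) \wedge \neg I(x)$, to a disjunction that entails $\Val(xs)\;I \vee \Val(ys)\;I$; lemma (L2) then discharges the goal. The principal obstacle is precisely this cancellation step: one must recognise that the witnesses making $\Val(c_1)$ and $\Val(c_2)$ true need not be the pivot pair, and that $\mathit{Logical}\;I$ is exactly the hypothesis required to rule out the only inconsistent configuration $I(x) = I(-x) = \mathit{True}$.
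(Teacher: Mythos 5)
Your proposal is correct and follows essentially the same route as the paper's proof: induction over the structure of the two clauses, a case split along the if-then-else branches of $\mathit{union}$, use of the induction hypothesis in the non-pivot branches, and derivation of a contradiction between $I(x)$ and $\neg I(x)$ (via $\mathit{Logical}\;I$) in the complementary-pair branch. The only difference is that you make explicit the generalisation over the accumulator and the auxiliary lemma about $\mathit{auxunion}$ --- details the paper's two-sentence sketch leaves implicit but which any mechanised proof of this statement would also require.
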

\begin{proof}
The proof begins by structural induction on $c_1$ and $c_2$. The first
three sub-goals are easily proven by term rewriting and simplification
by unfolding the definitions of $\bowtie$, $\Val$ and $\AND$. The last
sub-goal is proven by doing a case split on if-then-else and then
using a combination of induction hypothesis and generating conflict
among some of the assumptions. A detailed transcription of the Coq proof is
available from http://sites.google.com/site/certifiedsat/.
\end{proof}



\subsection{Correctness of Implementation}
Further to ensure that the formalization of our checker is correct we 
check that the union function is implemented correctly. We check that
it preserves the following properties of the trivial resolution
function. These properties are:
\begin{itemizeC}
\item A pair of complementary literals is deleted in the resolvent
obtained from resolving a given pair of clauses (Theorem~\ref{compl}).

\item All non-complementary pair of literals that are unequal 
are retained in the resolvent (Theorem~\ref{nocompl}). 

\item For a given pair of clauses, if there are no duplicate literals
  within each clause, then for a literal that exists in both the
  clauses of the pair, only one copy of the literal is retained in the
  resolvent (Theorem~\ref{equal}).

\end{itemizeC}
We have proven these properties in Coq. The actual proof consists of
proving several small and big lemmas - in total about 4000 lines of
proof script in Coq (see the proofs online).

The general strategy  is to use structural
induction on clauses \coqdocvar{c_1} and \coqdocvar{c_2}. For each theorem, 
this results
in four main goals, three of which are proven by contradiction since
for all elements $\ell$,  $\ell \notin nil$. For the remaining goal 
a case-split is done on if-then-else, thereby producing sub-goals,
some of whom are proven from induction hypotheses, and some from conflicting
assumptions arising from the case-split. 

\begin{theorem}{A pair of complementary literals is deleted.}\label{compl}
\coqdoceol
\coqdocnoindent
\ensuremath{
\hspace*{5mm}\forall \coqdocvar{c_1}\ \coqdocvar{c_2}\cdot\ 
\mathit{Wf}\ c_1\ \imp\ \mathit{Wf}\ c_2\ \imp \mathit{UniqueCompPair}\; c_1\,
c_2\ \imp \\
\hspace*{20mm}    \forall {\ell}_1\ \ell_{2}\cdot\
      (\ell_1 \in \coqdocvar{c_1})\ \imp\ (\ell_2 \in \coqdocvar{c_2})\ 
      \imp\ (\ell_1+\ell_2=0) \imp \\ 
\hspace*{3.5cm}(\ell_1 \notin (\coqdocvar{c_1} \Cup \coqdocvar{c_2})) \wedge 
(\ell_2\ \notin (\coqdocvar{c_1} \Cup \coqdocvar{c_2}))
}
\end{theorem}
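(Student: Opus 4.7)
The plan is to proceed by simultaneous structural induction on $c_1$ and $c_2$, as the authors suggest in the paragraph preceding the theorem. The four base cases in which either list is $\mathit{nil}$ are immediate: the hypothesis $\ell_1 \in c_1$ (respectively $\ell_2 \in c_2$) contradicts $\ell \notin \mathit{nil}$, so the goal follows vacuously. For the inductive step with $c_1 = x\mathbin{::}xs$ and $c_2 = y\mathbin{::}ys$, I would unfold $\Cup$ to expose $\mathit{union}\ c_1\ c_2\ \mathit{nil}$ and case-split on the guards $x+y=0$, $\mathit{abs}\ x < \mathit{abs}\ y$, $\mathit{abs}\ y < \mathit{abs}\ x$, and the remaining equal-absolute-value branch.

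In the branch $x+y=0$, the $\mathit{UniqueCompPair}$ hypothesis together with sortedness by absolute value forces $\ell_1 = x$ and $\ell_2 = y$, because any other occurrence of $\pm\ell_1$ inside $c_1$ or $c_2$ would either violate $\mathit{NoDup}$, $\mathit{NoCompPair}$, or the uniqueness of the cross-complementary pair. The call reduces to $\mathit{auxunion}\ xs\ ys\ \mathit{nil}$, and I need a supporting lemma stating that $\mathit{auxunion}$ does not fabricate new literals beyond those present in its inputs or accumulator; since $\ell_1, \ell_2 \notin xs \cup ys$ by the uniqueness argument just made, the conclusion follows. In the three branches where $x+y \neq 0$, one head is pushed onto the accumulator and a single list is shortened; the induction hypothesis then applies, provided we also verify that $\mathit{Wf}$ and $\mathit{UniqueCompPair}$ are preserved when one well-formed head is stripped, which is routine (sortedness, absence of duplicates, and internal non-complementarity are all inherited by tails, and the unique cross pair survives because sortedness guarantees the stripped head cannot be $\ell_1$ or $\ell_2$).

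The main obstacle I expect is the interaction between the threaded accumulator and the final output. Because $\mathit{union}$ only flushes the reversed accumulator at the $\mathit{nil}$ base case, the plain induction hypothesis will not match the recursive calls in which the accumulator has grown. I therefore plan to prove a strengthened generalisation in which the accumulator is an arbitrary list disjoint from $\{\ell_1, \ell_2\}$, and then specialise to $\mathit{nil}$. This generalisation also forces an auxiliary reasoning step that the heads moved into the accumulator during the $x+y \neq 0$ branches are, by sortedness, strictly smaller in absolute value than $\mathit{abs}\ \ell_1 = \mathit{abs}\ \ell_2$, so they cannot coincide with $\ell_1$ or $\ell_2$ and the disjointness invariant is maintained. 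Once this lemma is in place, the symmetric conclusion $\ell_2 \notin c_1 \Cup c_2$ either follows by a dual argument, or by proving the more uniform statement $\forall \ell \in \{\ell_1, \ell_2\}\cdot \ell \notin c_1 \Cup c_2$ directly in the same induction, which avoids duplicating the case analysis.
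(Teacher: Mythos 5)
Your proposal is correct and follows essentially the same route as the paper, which only records the high-level strategy (structural induction on $c_1$ and $c_2$, three base cases discharged via $\ell \notin \mathit{nil}$, and a case-split on the if-then-else of $\mathit{union}$ resolved by induction hypotheses and conflicting assumptions); your strengthening to an arbitrary accumulator disjoint from $\{\ell_1,\ell_2\}$ is exactly the generalisation needed to make that induction go through in Coq. One small caveat: in the equal-absolute-value branch the stripped head is excluded from $\{\ell_1,\ell_2\}$ by $\mathit{NoCompPair}$ rather than by a strict inequality on absolute values, but this does not affect the argument.
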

\coqdocnoindent

Note that to ensure that only a single pair of complementary literals
is deleted we need to assume that there is a unique complementary pair ($\mathit{UniqueCompPair}$). The above
theorem will not hold for the case with multiple complementary pairs. 

For the following theorem we need to assert in the assumption that for
any literal in one clause there exists no literal in the other clause
such that the sum of two literals is 0. This is defined by the
predicate \coqdocvar{NoCompLit}.
\begin{theorem}{All non-complementary, unequal literals are retained.}
\label{nocompl}
\coqdoceol
\coqdocnoindent
\ensuremath{
\hspace*{5mm}\forall \coqdocvar{c_1}\ \coqdocvar{c_2}\cdot\ 
\mathit{Wf}\ c_1\ \imp\ \mathit{Wf}\ c_2\ \imp\\
\hspace*{20mm}    \forall {\ell}_1\ \ell_{2}\cdot\
      (\ell_1 \in \coqdocvar{c_1})\ 
      \imp\ (\ell_2 \in \coqdocvar{c_2}) \imp\\
\hspace*{33mm}      (\coqdocvar{NoCompLit}\,\ell_1\,c_2) \imp
      (\coqdocvar{NoCompLit}\,\ell_2\,c_1)\ \imp \\ \hspace*{35mm}
      (\ell_1 \neq \ell_2)\ \imp\
(\ell_1 \in (\coqdocvar{c_1} \Cup \coqdocvar{c_2})) \wedge 
(\ell_2\ \in (\coqdocvar{c_1} \Cup \coqdocvar{c_2}))
}
\end{theorem}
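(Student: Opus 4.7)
The plan is to perform simultaneous structural induction on $c_1$ and $c_2$, following the recursion pattern of the $\mathit{union}$ function that underlies $\Cup$. Since $\mathit{union}$ carries an accumulator that is only reversed and appended to the surviving tail at the base cases, I first strengthen the statement to an auxiliary lemma about $\mathit{union}\; c_1\; c_2\; \mathit{acc}$ for an arbitrary $\mathit{acc}$: every literal of $c_1$ or $c_2$ that lacks a complementary partner in the opposite clause is retained in the result, along with every literal already present in $\mathit{acc}$. The theorem then follows by instantiating $\mathit{acc} = \mathit{nil}$ and unfolding $c_1 \Cup c_2$.

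The double induction yields four cases. The three cases where at least one of $c_1$ and $c_2$ is $\mathit{nil}$ are closed by contradiction, since the hypothesis $\ell_1 \in c_1$ (or $\ell_2 \in c_2$) combined with $\ell \notin \mathit{nil}$ is immediately inconsistent; when only one side is $\mathit{nil}$, the reduct of $\mathit{union}$ simply copies the non-empty clause after appending the reversed accumulator, so membership of the literal on the non-empty side is preserved trivially.

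The interesting case is $c_1 = x::xs$ and $c_2 = y::ys$, where I case-split on the four branches of the if-then-else cascade inside $\mathit{union}$. When $x + y = 0$, the $\mathrm{NoCompLit}$ assumptions exclude $\ell_1 = x$ and $\ell_2 = y$, because the complementary witness $-x = y$ or $-y = x$ would contradict those assumptions; hence $\ell_1 \in xs$ and $\ell_2 \in ys$, and the induction hypothesis applies after checking that $\mathit{Wf}$ is inherited by tails. When $|x| < |y|$ the recursive call is on $xs$ and $y::ys$ with $x$ pushed onto $\mathit{acc}$; if $\ell_1 = x$ the accumulator invariant delivers the conclusion directly, otherwise $\ell_1 \in xs$ and the IH applies, while $\ell_2$ is unaffected since $c_2$ is passed intact and its $\mathrm{NoCompLit}$ hypothesis is likewise inherited. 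The case $|y| < |x|$ is symmetric. When $|x| = |y|$ and $x + y \neq 0$, well-formedness (sortedness together with $\mathrm{NoCompPair}$) forces $x = y$, and the single copy retained in $\mathit{acc}$ witnesses membership for whichever of $\ell_1, \ell_2$ coincides with $x$; the $\ell_1 \neq \ell_2$ hypothesis guarantees that the other literal lies strictly in the respective tail, where the IH finishes the job.

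The main obstacle I expect is the bookkeeping around the accumulator: because $\mathit{union}$ is tail-recursive and only flushes $\mathit{acc}$ at the nil base cases via $\mathit{app}\;(\mathit{rev}\;\mathit{acc})$, the induction hypothesis must be quantified over arbitrary accumulators, and I will need small membership lemmas such as $\ell \in \mathit{acc} \implies \ell \in \mathit{app}\;(\mathit{rev}\;\mathit{acc})\;c$ and its propagation through the recursive calls, plus a preservation lemma stating that $\mathit{Wf}$ on $x::xs$ entails $\mathit{Wf}$ on $xs$ together with a strict-minimum property for $x$. Once these accumulator invariants and well-formedness preservation lemmas are in place, the four branches resolve by fairly mechanical combinations of the induction hypotheses and the sortedness/NoCompLit assumptions.
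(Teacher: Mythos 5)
Your proposal follows essentially the same route as the paper: structural induction on $c_1$ and $c_2$ yielding four goals, the three nil cases dismissed by contradiction from $\ell \notin \mathit{nil}$, and the cons--cons case handled by splitting the if-then-else cascade and combining induction hypotheses with conflicts among the sortedness/\coqdocvar{NoCompLit} assumptions. Your explicit generalization over the accumulator in $\mathit{union}$ is a sound and necessary technical refinement that the paper's proof sketch leaves implicit.
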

\vspace*{-3mm}
\begin{theorem}{Only one copy of equal literals is retained (factoring).}\label{equal}
\coqdocemptyline
\coqdocnoindent
\ensuremath{
\hspace*{5mm}
\forall \coqdocvar{c_1}\ \coqdocvar{c_2}\cdot\ 
\mathit{Wf}\ c_1\ \imp\ \mathit{Wf}\ c_2\ \imp\\
\hspace*{20mm}    \forall {\ell}_1\ \ell_{2}\cdot\
      (\ell_1 \in \coqdocvar{c_1})\ \imp\ (\ell_2 \in \coqdocvar{c_2})\ 
      \imp\ (\ell_1 = \ell_2) \imp \\ 
\hspace*{35mm}((\ell_1 \in (\coqdocvar{c_1} \Cup \coqdocvar{c_2})) \wedge 
(\mathit{count}\,\, \ell_1\, (\coqdocvar{c_1} \Cup \coqdocvar{c_2}) = 1))
}
\end{theorem}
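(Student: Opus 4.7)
The plan is to prove the theorem by simultaneous structural induction on $c_1$ and $c_2$, leaning heavily on the $Wf$ hypothesis, and in particular on sortedness by absolute value together with $NoDup$, so that once a literal is consumed by $union$ it cannot reappear in the tail of either input clause.

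In the base cases, where either $c_1$ or $c_2$ is $\coqdocvar{nil}$, the membership hypotheses $\ell_1 \in c_1$ or $\ell_2 \in c_2$ contradict $\ell_1 \notin \coqdocvar{nil}$ and discharge the goal immediately. In the inductive step with $c_1 = x::xs$ and $c_2 = y::ys$, I would mirror the case analysis performed by $\mathit{union}$: (i) if $x+y=0$, then since $Wf\, c_1$ and $Wf\, c_2$ force every tail element to have strictly larger absolute value than $x$ respectively $y$, the shared literal $\ell_1 = \ell_2$ cannot be $x$ or $y$, so $\ell_1 \in xs$ and $\ell_1 \in ys$, and a companion induction hypothesis on $\mathit{auxunion}$ closes the case; (ii) if $|x|<|y|$, then $\ell_1$ cannot be $x$ (otherwise $\ell_2=x$ would sit in $c_2$ with absolute value $|x|<|y|$, contradicting sortedness of $c_2$), so the IH applies to $(xs,\, y::ys)$; the case $|y|<|x|$ is symmetric; (iii) in the fall-through branch where $|x|=|y|$ and $x+y\neq 0$, the $NoCompPair$ and sortedness conditions force $x=y$, this single copy is placed into the accumulator, and the recursion continues on $xs$ and $ys$.

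The crux of the argument, and the main technical obstacle, lies in showing that in case (iii) the single copy placed into the accumulator is the \emph{only} copy appearing in the final resolvent, and in establishing the $\mathit{count}\;\ell_1\;(c_1 \bowtie c_2) = 1$ conclusion uniformly across all cases. A plain induction on the top-level call does not suffice because the accumulator grows at every recursive step, so I would prove a strengthened auxiliary lemma about $\mathit{union}\,c_1\,c_2\,acc$ with a generic accumulator, of the form: for any $\ell$ with $\ell \notin c_1$, $\ell \notin c_2$, and $\ell \notin acc$, we have $\ell \notin \mathit{union}\,c_1\,c_2\,acc$; together with a companion lemma bounding $\mathit{count}\,\ell\,(\mathit{union}\,c_1\,c_2\,acc)$ by the sum of $\mathit{count}\,\ell$ in the three input lists. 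Combined with the sortedness-driven observation that $\ell_1 \notin xs$ and $\ell_1 \notin ys$ once we have consumed $x=y$, these lemmas deliver $\ell_1 \in (c_1 \bowtie c_2)$ and rule out any second occurrence. Analogous generalized lemmas for $\mathit{auxunion}$ are needed to close case (i), and corresponding $\mathit{count}$-preservation lemmas for $\mathit{rev}$ and $\mathit{app}$ handle the tail-copy steps at the base cases.

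I expect the bookkeeping around these accumulator-parameterized generalizations to dominate the proof effort: the statements must be strong enough to thread $NoDup$ and sortedness invariants through every recursive call (so that the accumulator itself remains free of $\ell_1$ and of duplicates), yet weak enough to remain provable by direct induction. Once these auxiliaries are in place, the main theorem reduces to instantiating them with $acc = \coqdocvar{nil}$ and case-splitting as above, in the same proof style used for Theorems~\ref{compl} and~\ref{nocompl}.
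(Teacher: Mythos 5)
Your proposal matches the paper's own (very terse) proof strategy: structural induction on $c_1$ and $c_2$, with three of the four resulting goals discharged by contradiction from $\ell \notin \coqdocvar{nil}$ and the fourth by a case split on the if-then-else in $\mathit{union}$, closed via induction hypotheses and conflicting assumptions, supported by a collection of auxiliary lemmas. The accumulator-generalized lemmas you describe are exactly the kind of ``several small and big lemmas'' the paper alludes to without spelling out, so your account is a correct and more explicit rendering of the same argument.
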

In order to check the resolution steps for each row, one has to
collect the actual clauses corresponding to their identifiers and this
is done by the \coqdocvar{findClause} function. 
The function  $\mathit{findClause}$ takes a list of clause identifiers
($\mathit{dlst}$), an accumulator ($\mathit{acc}$) to collect the list of
clauses, and requires as input a table that has the information about
all the input clauses ($\mathit{ctbl}$). If a clause id is processed, 
then its resolvent is fetched from the resolvent table
($\mathit{rtbl}$), else obtained from $\mathit{ctbl}$. If there is no
entry for a given id in the resolvent table and in the clause table,
an error is signalled. This error denotes the fact that there
was an input/output problem with the proof trace file due to which
some input clauses in the proof trace could not be accessed properly. 
This could have happened either because the proof trace was ill-formed 
accidentally or wilfully tampered with. 
\coqdocemptyline

The function that uses the $\Cup$ function recursively on a list of
input clause chain is called $\mathit{chainResolution}$ and it simply folds the
$\Cup$ function from left to right for every row in the proof part of
the proof trace file.
\coqdocemptyline
\hrule
\coqdocemptyline
\coqdocnoindent
\coqdockw{Definition} \coqdocvar{chainResolution} \coqdocvar{lst} =\coqdoceol
\coqdocindent{1.00em}
\coqdockw{match} (\coqdocvar{lst:list} (\coqdocvar{list} \coqdocvar{Z})) \coqdockw{with} \coqdoceol
\coqdocindent{2.00em}
\ensuremath{|} \coqdocvar{nil} \ensuremath{\Rightarrow} \coqdocvar{nil}\coqdoceol
\coqdocindent{2.00em}
\ensuremath{|} (\coqdocvar{x::xs}) \ensuremath{\Rightarrow}
\coqdocvar{List.fold\_left} ($\Cup$) \coqdocvar{xs} \coqdocvar{x}\coqdoceol
\coqdocemptyline
\hrule
\coqdocemptyline
\coqdocnoindent
The function $\mathit{findAndResolve}$ is our last function defined in
Coq world for Unsat checking and provides a wrapper on other
functions. Once the input clause file and proof trace files are opened
and read into different tables, $\mathit{findAndResolve}$ starts the checking
process by first obtaining the clause ids from the proof part of the
proof trace file, and then invoking $\mathit{findClause}$ to collect all the
clauses for each row in the proof part of the proof trace file. Once
all the clauses are obtained the function $\mathit{chainResolution}$ is called
and applied on the list of clauses row by row. For each row the
resolvent is stored in a separate table. The checker then simply
checks if the last row has an empty clause, and if there is one, it
agrees with the sat solver and says yes, the problem is UNSAT, else no.

If the proof trace part of the trace file contains nothing
(ill-formed) then there would be no entry for an identifier in the
trace table ($\mathit{ttbl}$), and this is signalled by en error state consisting
of a list with a single zero. Since zeros are otherwise prohibited to
be a legal part of the CNF problem description, we use them to signal
error states. Similarly, if the clause and trace table both are empty,
then a list with two zeros is output as an error.

\subsection{Program Extraction}
We extract the OCaml code by using the built-in 
extraction API in Coq. At the time of extraction
we mapped several Coq datatypes and data structures
to equivalent OCaml ones. 
For optimization we made the following replacements:
\begin{enumerate}
\item Coq Booleans by OCaml Booleans.
\item Coq integers (Z) by OCaml int.
\item Coq lists by OCaml lists.
\item Coq finite map by OCaml's finite map.
\item The combination of \coqdocvar {app} and \coqdocvar{rev} 
on lists in the function \coqdocvar{union}, and \coqdocvar{auxunion} 
was replaced by the tail-recursive  {List.rev\_append} in OCaml.
\end{enumerate}
Replacing Coq Zs with OCaml integers gave a performance boost by a
factor of 7-10. Making minor adjustments by replacing the Coq finite maps
by OCaml ones and using tail recursive functions gave a further 20\% 
improvement. An important consequence of our extraction is that only
some datatypes, and data structures get mapped to OCaml's; the key
logical functionality is unmodified. The decisions for making changes
in data types and data structures are a standard procedure in any
extraction process using Coq~\cite{CoqBook}.

\begin{center}
\begin{table*}[!ht]
\begin{flushleft}
\captionsetup{justification=raggedright}\caption{Comparison of our results with HOL 4 and
  Tracecheck. Number of Resolutions (inferences) shown for HOL 4 is the number
  that HOL 4 calculated from the proof trace obtained from running
  ZVerify - the uncertified checker (for zChaff) that Amjad used 
to obtain the proof trace. \cunsat Resolution count is obtained from
the proof trace generated by the uncertified checker Tracecheck. 
In terms of inferences/second, we are 1.5 to 32 times faster
than Amjad's HOL 4 checker, whilst a factor 2.5 slower than
Tracecheck. All times shown are total times for all the three
checkers.The symbol z? denotes that zChaff timed out after an hour.
\vspace{1.0ex}
}\label{comptbl}
\end{flushleft}
\resizebox{!}{5.5cm}{
\hfill{}
\begin{tabular}{|r|l|r|r|r|r|r|r|r|r|l|r|} \hline \hline
{No.}  & {Benchmark} & 
\multicolumn{3}{c|}{{HOL 4}} & 
\multicolumn{3}{c|} {{\cunsat}} & 
\multicolumn{2}{c|}{Tracecheck} \\ \hline
& & \emph{Resolutions} & \emph{Time} & \emph{inf/sec} & \emph{Resolutions}
 & \emph{Time} & \emph{inf/s} & \emph{Time} & \emph{inf/s} \\ \hline
1. & een-tip-uns-numsv-t5.B & 89136 & 4.61 & 19335 & 122816 &  0.86 & 142809 & 0.36 & 
    341155 \\ \hline
2. & een-pico-prop01-75 & 205807 & 5.70 &  36106 & 246430 &  1.67 &
147562 & 0.48 & 513395 \\ \hline
3. & een-pico-prop05-50 & 1804983  & 58.41 &  30901 & 2804173 &  20.76
& 135075 & 8.11 & 345767\\   \hline  
4. & hoons-vbmc-lucky7 & 3460518 & 59.65 & 58013 &  4359478 &  35.18 &
123919 & 12.95 & 336639 \\ \hline
5. & ibm-2002-26r-k45 & 1448 &   24.76 &  58 & 1105 &  0.004 &  276250
& 0.04 & 27625 \\ \hline
6. & ibm-2004-26-k25 & 1020 &  11.78 & 86 &  1132 &  0.004 & 283000 &
0.04 & 28300 \\ \hline
7. & ibm-2004-3\_02\_1-k95 & 69454 &  5.03 & 13807 & 114794 &  0.71 &
161681 & 0.35 & 327982 \\ \hline
8. & ibm-2004-6\_02\_3-k100 &  111415 & 7.04 & 15825 & 126873 & 0.9 &
140970 & 0.40 & 317182 \\ \hline
9. & ibm-2002-07r-k100 & 141501 & 2.82 &  50177 & 255159  & 1.62   &
157505 & 0.54 & 472516 \\ \hline
10. & ibm-2004-1\_11-k25 & 534002 &  13.88 & 38472 & 255544  &  1.77 & 144375 
& 0.75 & 340725 \\ \hline
11. & ibm-2004-2\_14-k45 & 988995 &  31.16 & 31739 & 701430  &  5.42
& 129415 & 1.85 & 379151 \\ \hline 
12. & ibm-2004-2\_02\_1-k100 & 1589429 &  24.17 & 65760 & 1009393 &
7.42 & 136036 & 3.02 &  334236 \\ \hline
13. & ibm-2004-3\_11-k60  &  z? & z? & - & 13982558 & 133.05 & 105092
& 59.27 & 235912 \\ \hline
14. & manol-pipe-g6bi &	82890 & 2.12 & 39099 & 245222 & 1.59 & 154227 &
 0.50 &  490444 \\ \hline
15. & manol-pipe-c9nidw\_s & 700084  & 26.79 &  26132 & 265931 &  1.81
& 146923 & 0.54  & 492464 \\ \hline
16. & manol-pipe-c10id\_s & 36682 & 11.23 & 3266 & 395897 & 2.60 &
152268 & 0.82 &  482801\\ \hline
17. & manol-pipe-c10nidw\_s & z? &z? & - & 458042 & 3.06 & 149686 & 1.21  
& 381701 \\ \hline
18. & manol-pipe-g7nidw & 325509  &  8.82 & 36905 & 788790 & 5.40 
 & 146072 & 1.98 & 398378\\ \hline
19. & manol-pipe-c9 & 198446 & 3.15 &  62998 & 863749 &   6.29 &
137320 & 2.50 & 345499 \\ \hline
20. & manol-pipe-f6bi & 104401 &  5.07 &  20591 & 1058871 &  7.89 &
134204 & 2.97  & 356522 \\ \hline
21. & manol-pipe-c7b\_i & 806583 &   13.76 & 58617 & 4666001 & 38.03 &
122692 & 15.54 & 300257\\ \hline
22. & manol-pipe-c7b & 824716 & 14.31 & 57632 & 4901713 &  42.31 &
115852 & 18 & 272317\\ \hline
23. & manol-pipe-g10id & 775605 &  23.21 &  33416 & 6092862 &  50.82 &
119891 & 21.08 & 289035\\ \hline
24. & manol-pipe-g10b	& 2719959 &  52.90 &  51416 & 7827637 &  64.69 & 121002
& 26.85 & 291532 \\ \hline
25. & manol-pipe-f7idw & 956072 &  35.17 & 27184 & 7665865 & 68.14 &
112501 & 30.74 & 249377 \\ \hline
26. & manol-pipe-g10bidw &  4107275  & 125.82  & 32644  & 14776611   
& 134.92 & 109521 & 68.13  & 216888\\ \hline
\end{tabular}}
\end{table*}
\end{center}

\section{Experimental Results}
We evaluated our certified checker \cunsat on a set of benchmarks from 
the SAT Races of 2006 and 2008 and the SAT Competition of 2007. We
present our results on a sample of the SAT Race Benchmarks in
Table~\ref{comptbl}. The results for \cunsat shown in the table are
for validating proof traces obtained from the PicoSAT solver. 
Our experiments were carried out on a server
running Red Hat on a dual-core 3 GHz, Intel Xeon CPU with 28GB
memory. 

The HOL 4 and Isabelle based checkers~\cite{Weber:2009p1923} were
evaluated on the SAT Race Benchmarks shown in the
table~\cite{weber-private09}.
Isabelle reported segmentation faults on most of the problems, whilst
HOL 4's results are summarized along with our's in
Table~\ref{comptbl}. HOL 4 was run on an AMD dual-core 3.2 GHz
processor running Ubuntu with 4GB of memory.
We also compare our timings with that obtained from the uncertified checker Tracecheck. Since the size of the proof
traces obtained from zChaff is substantially different than the size
of the traces obtained from Tracecheck on most problems, we decided to
compare the speed of our checker with HOL 4 and Tracecheck in terms of
resolutions (inferences) solved per second. We observe that in terms
of inferences/sec we are 1.5 to 32 times faster than HOL 4 and
2.5 times slower than Tracecheck. Times shown for all the three checkers
in the table are the total times  including time spent on actual
resolution checking, file I/O and garbage collection. Amjad reported
that the version of the checker he has used on these benchmarks is
much faster than the one published in~\cite{Weber:2009p1923}. 

As a proof of concept we also validated the proof traces from zChaff by
translating them to PicoSAT's trace format. The performance of \cunsat
in terms of {\it inf/sec} on the translated proof traces (from zChaff
to PicoSAT) was similar to the performance of \cunsat
when it checked PicoSAT's traces obtained directly from the PicoSAT
solver -- something that is to be expected.
\vspace*{-3mm}
\subsection{Discussion}
The Coq formalization consisted of 8 main function definitions
amounting to nearly 160 lines of code, and 4 main theorems shown in
the paper and 4 more that are about maps (not shown here due to
space). Overall the proof in Coq was nearly
4000 lines consisting of the proofs of several big and small lemmas
that were essential to prove the 4 main theorems. The extracted
OCaml code was approximately 2446 lines, and the OCaml glue code was 324 lines. 

We found that there is no
implementation of the array data type which meant that we had to use
the type of list. Since lists are defined inductively, it is easier to
do reasoning with them,  although implementing very fast and
efficient functions on these is impossible. 
In a recent development related to Coq, there has been an emergence of
a tool called Ynot~\cite{Nanevski:2008p1326} that can deal with
arrays, pointers and file
related I/O in a Hoare Type Theory. Future work in certification using
Coq should definitely investigate the relevance and use of this.

We noticed that the OCaml
compiler's native code compilation does produce efficient binaries but
the default settings for automatic garbage collection were not
useful. We observed that if we do not tune the runtime environment
settings of OCaml by setting the values of {\tt{\small{OCMALRUNPARAM}}}, as soon as
the input proof traces had more than a million inferences, garbage
collection would kick in so severely that it will end up consuming (and
thereby delaying the overall computation) as much as 60\% of the total
time. By setting the initial size of major heap to a large value such
as 2 GB and making the garbage collection less eager, we noticed that
the computation times of our checker got reduced by upto a factor of 7 on
proof traces with over 1 million inferences.

\vspace*{-2mm}\section{Related Work}

Recent work on checking the result of SAT solvers can be traced to the
work of Zhang \& Malik~\cite{zhang-date03} and Goldberg \&
Novikov~\cite{goldberg-date03}, with additional insights provided in
recent work~\cite{Beame:2004p1080,vangelder-sat07}.
Besides Weber and Amjad, others who have advocated the use of a checker
include Bulwahn et al~\cite{Bulwahn:2008p6003} who experimented with the
idea of doing reflective theorem proving in Isabelle and suggested that
it can be used for designing a SAT checker. 

In a recent paper~\cite{Maric:2009p6673}, Mari{\'c} presented a
formalization of SAT solving algorithms in Isabelle that are used in
modern day SAT solvers. 
An important difference is that 
whereas we have formalized a SAT checker and {\em extracted} an
executable code from the formalization itself, Mari{\'c} formalizes a
SAT solver (at the abstract level of state machines) and then
implements  the  verified algorithm in the SAT solver {\em off-line}.

An alternative line of work involves the formal development of SAT
solvers. Examples include the work of Smith \&
Westfold~\cite{smith-tr08} and the work of Lescuyer and
Conchon~\cite{Lescuyer:2009p1671}. Lescuyer and Conchon have
formalized a simplified SAT solver in Coq and extracted an executable.
However, the performance results have not been reported on any
industrial benchmarks. This is because they have not formalized
several of the key techniques used in modern SAT solvers.
The work of of Smith \& Westfold involves the formal synthesis of a SAT
solver from a high level description. Albeit ambitious, the
preliminary version of the SAT solver does not include the most
effective techniques used in modern SAT solvers.

There has been a recent surge in the area of certifying SMT solvers.
M.~Moskal recently provided an efficient certification technique for
SMT solvers~\cite{Moskal:2008p1806} using term-rewriting systems. The
soundness of the proof checker is guaranteed through a formalization
using inference rules provided in a term-rewriting formalism. L.~de
Moura and N.~Bj\o rner~\cite{deMoura:2009p1840} presented the proof
and model generating features of the state-of-the-art SMT solver Z3.

\vspace*{-4mm}\section{Conclusion}
In this paper we presented a methodology for performing efficient yet
formally certified SAT solving. The key feature of our approach is
that we did a one-off formal design and reasoning of the checker using
Coq proof-assistant and extracted an OCaml program which was used as a
standalone executable to check the outcome of industrial-strength
SAT solvers such as PicoSAT and zChaff. Our certified checker can be plugged in
with any proof generating SAT solver with previously agreed
certificates for satisfiable and unsatisfiable problems. On one hand
our checker provides much higher assurance as compared to uncertified
checkers such as Tracecheck and on the other it enhances usability
and performance when compared to the certified checkers implemented in
HOL 4 and Isabelle. In this regard our approach provides an arguably
optimal middle ground between the two extremes. We are investigating
on optimizing the performance aspects of our checker even further so
that the slight difference in overall performance between uncertified
checkers and us can be further minimized.

\vspace*{-2mm}
\medskip\noindent
{\small
{\bf Acknowledgements.~}
We thank H.~Herbelin, Y.~Bertot, P.~Letouzey, and many more people on the Coq mailing
list who helped us with Coq questions. We also thank T.~Weber and
H.~Amjad for answering our questions on their work and also carrying
out industrial benchmark evaluation on their checker.  A.~P.~Landells
helped out with server issues.  This work was partially funded by
EPSRC Grant EP/E012973/1, and EU Grants ICT/217069 and IST/033709.
}



\end{document}